\theoremstyle{plain}
\newtheorem{thm}{Theorem}[section]
\newtheorem{lem}[thm]{Lemma}
\theoremstyle{definition}
\newtheorem{defn}{Definition}[section]
\begin{document}
\title{Forwarding and Optical Indices in an All-Optical BCube Network}

\author{\IEEEauthorblockN{Suzhen Wang, Jingjing Luo, Wing Shing Wong}
\IEEEauthorblockA{Department of Information Engineering\\
The Chinese University of Hong Kong\\
Email: {ws012,  jjluo,  wswong}@ie.cuhk.edu}
\and
\IEEEauthorblockN{Yuan-Hsun Lo}
\IEEEauthorblockA{The School of Mathematical Sciences\\
Xiamen University, Xiamen 361005, China\\
Email: yhlo0830@gmail.com}
}
\maketitle

\begin{abstract}
BCube is a highly scalable and cost-effective networking topology, which has been widely applied to modular datacenters.
Optical technologies based on Wavelength Division Multiplexing (WDM) are gaining popularity for Data Center Networks (DCNs) due to their technological strengths such as low communication latency, low power consumption, and high link bandwidth. 
Therefore, it is worth investigating optical techniques into the BCube architecture for future DCNs.
For this purpose, we study the forwarding and optical indices in an all-optical BCube network. Consider an all-optical BCube network in which every host sets up a connection with every other host. The optical index is the minimum number of wavelengths required by the network to support such a host-to-host traffic, under the restriction that each connection is assigned a wavelength that remains constant in the network. A routing is a set of directed paths specified for all host pairs. By defining the maximum link load of a routing as the maximum number of paths passing through any link, the forwarding index is measured to be the minimum of maximum link load over all possible routings. The forwarding index turns out to be a natural lower bound of the optical index. In this paper, we first compute the forwarding index of an all-optical BCube network. Then, we derive an upper bound of the optical index by providing an oblivious routing and wavelength assignment (RWA) schemes, which attains the lower bound given by the forwarding index in some small cases. Finally, a tighter upper bound is obtained by means of the chromatic numbers in Graph Theory.
\end{abstract}

\IEEEpeerreviewmaketitle

\section{Introduction}

Data Center Networks  (DCNs) are core infrastructures for various online services and cloud applications such as social networking and cloud computing.
Those services and applications are engendering an exponential traffic growth, which places a significant demand on network bandwidth. 
To meet this demand, all-optical DCNs arise as promising architectures because they offer extremely high bandwidth by adopting Wavelength Division Multiplexing (WDM) \cite{all_optical_scheduling, wave_Cube_ToN, OSA_optical_switching}. Besides, optical DCNs are reported to consume much less power compared with electronic DCNs  \cite{Helios,OSA_optical_switching}. 

In a large-scale datacenter deployment, traditional hierarchical tree topologies face issues such as link oversubscription and network bisection-bandwidth bottlenecks. 
To address these issues, researchers have proposed various scalable topology solutions such as  fat-trees \cite{al2008scalable}, BCube \cite{guo2009bcube}, and ExCCC \cite{ExCCC-DCN}. 
A fat-tree is a folded version of a Clos network which was first designed in mid-1950s \cite{clos1953study}. 
BCube, as a modified version of Hypercube, was proposed recently by  Guo \textit{et al.}  \cite{guo2009bcube} for building modular datacenters.
Both fat-trees and BCube achieve linear relationships between the network  bisection bandwidth and the  network size. 
However, BCube is reported to be more cost-effective than fat-trees.
In addition, a BCube network can push the routing and scheduling functionalities to end-servers, which helps alleviate the routing burden on intermediate switches. 
For simplicity, we refer to servers or end-servers as hosts.

Note that all-optical DCNs are promising and BCube is  highly scalable and economic for building modular datacenters. 
In this paper, we make the first attempt to study the fundamental problem of Routing and Wavelength Assignment (RWA) in an all-optical BCube network considering a \textit{host-to-host} traffic, where we assign every Source-Destination (S-D) host pair with a nonblocking \textit{lightpath} --- consists of a single physical path and a single wavelength --- such that all host pairs can communicate simultaneously.
We describe lightpaths are \textit{nonblocking} if lightpaths that share a common link have different wavelengths.
Since wavelength is a limited resource, the goal of  the RWA problem is to minimize wavelength usage \cite{routing_W, RWA_opt}. 
Although a host-to-host traffic may not arise frequently in practice, it evaluates the maximum communication capacity of a network and also locates a reference point for further communication analysis.  
To simplify the analysis, we divide the RWA problem into two parts: path allocation and wavelength assignment. 
In the part of path allocation, we aim to find a set of dipaths that minimizes the maximum link load; 
in the part of wavelength assignment, we aim to minimize the usage of wavelengths.
Specifically, the minimum of the maximum link load over all possible routings is referred to as the \textit{forwarding index} when link load is measured by the number of paths passing through it \cite{optical_fowarding, forward_optical_paper}.
We refer to the minimum number of wavelengths, used to support simultaneous host-to-host communication, as the \textit{optical index}. 
It has been shown that the optical index is naturally lower bounded by the forwarding index \cite{forward_optical_paper, optical_fowarding}.

It is NP-hard to derive either the optical index or the forwarding index in a general network since these problems are shown to be more complicated than a vertex coloring problem \cite{full_connections}. 
Therefore, there have been numerous attempts to study the optical and forwarding index in various interconnection networks such as fat-trees \cite{global_packing}, $4$-regular circulant networks \cite{forward_optical_paper}, and some Cartesian product of chains or cycles  \cite{full_connections, all-to-alloptical}. 
In particular, Lo \textit{et al.} \cite{global_packing} derived the optical index in an all-optical fat-tree network through explicit construction of an RWA scheme;  Beauquier \cite{full_connections} derived the forward and optical indices for some Cartesian product of simple graphs, such as cycles, chains and complete graphs. 
In this paper, we report three results shown as follows. 
First, we derive the value of the \textit{ fowarding index} in a BCube network. 
Second, we propose an \textit{oblivious} RWA scheme; the term oblivious signifies that the RWA assigns a lightpath to an S-D pair based only on its source and destination addresses.
Third, we derive an upper bound and a lower bound of the optical index in a BCube network.
The derived results can provide  insights into optimal lightpath allocation, and serve as a baseline for future research in more sophisticated RWA schemes in BCube networks. 

The rest of the paper is organized as follows.
Section 2 introduces some preliminaries.  
Section 3 introduces the BCube topology. 
Section 4 presents analysis on host-to-host communication in a BCube network. 
We conclude the paper in section 5. 

\section{Preliminaries}
In this paper, we consider a full-duplex network, where each node can send and receive messages at the same time.  
Hence, we can model an all-optical network by a \textit{symmetric digraph} --- a directed graph $G $ with vertex set $V(G)$ and arc set $A(G)$ such that if  $\alpha_{x,  y} \in A(G)$ then $\alpha_{y, x}\in A(G)$. 
Here $\alpha_{x, y}$ represents an arc directed from node $x$ to node $y$.
Let $P_{s, d}$ denote a directed path (dipath) from source node $s$ to destination node $d$. 
A set of dipaths is called routing. For a given routing $R$ of $G$, let 
$\pi(G, R, \alpha_{x, y})$ denote the load of arc $\alpha_{x,y}$ with respect to $R$. 
which is measured by the number of dipaths in $R$ that pass through $\alpha_{x,y}$. 
The maximum link load is then denoted by $\pi(G, R) := \max_{\alpha_{x, y} \in A(G)} \pi(G, R, \alpha_{x, y})$. 
Let $\mathcal{R}$ denote the collection of all possible routings.
Then the \textit{forwarding index} of a graph $G$, denoted by $\pi(G)$, is defined as the minimum of the maximum link load over all routings, i.e.,
\begin{equation}
\pi(G) :=\min_{R\in \mathcal{R}} \pi(G, R).
\end{equation}
To study the RWA problem, we represent wavelengths by different colors.
In an optimal wavelength assignment, the number of colors required is minimal.
Let $\omega(G, R)$ denote the minimum number of colors required to color dipaths of $R$ such that dipaths  are assigned with different colors if they share a common arc. 
The \textit{optical index} in a graph of $G$, denoted by $\omega(G)$, is defined as
\begin{equation}
\omega(G) = \min_{R \in \mathcal{R}} \omega(G, R).
\end{equation}
Since dipaths sharing a common arc should be assigned with different colors, we have 
\begin{equation}
\label{inequality_1}
\omega(G)\ge \pi(G).  \quad \quad \quad
\end{equation}

It is difficult to investigate whether the equality in (\ref{inequality_1}) holds for a general topology.
However, researchers have shown that the equality holds for some specific topologies such as cycles \cite{full_connections}, Hypercubes \cite{full_connections}, trees of cycles \cite{all-to-all3}, some Cartesian product of paths or cycles with equal length \cite{full_connections, all-to-alloptical}, and some circulant graphs \cite{forward_optical_paper}. 

In this paper, we evaluate the forwarding and optical indices in a BCube network by considering host-to-host routings. 
More precisely, by denoting $V_h$ the set of hosts in a BCube, a host-to-host routing is given by
$R  = \{P_{s, d}: s, d\in V_h(G), s\neq d\}$. 
The structure of BCubes will be given in next section.

\begin{figure}[htbp]
\begin{center}
\includegraphics[width = 3.2 in]{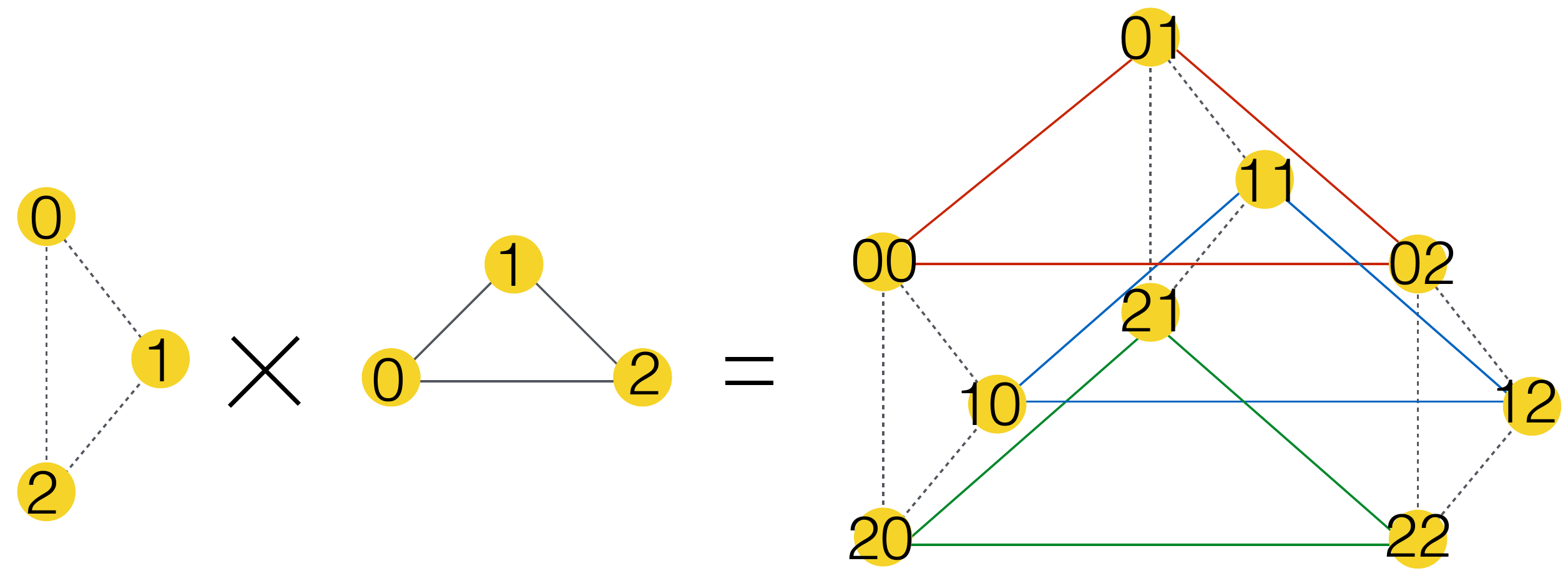}
\caption{ $\mathcal{H}(3, 3) = K_3 \times K_3$}
\label{product}
\end{center}
\end{figure}

BCube is closely related to the generalized Hypercube. Towards a better understanding of BCube, we first briefly explain some properties of Hypercube.
Let $K_{n}$ denote a complete graph with $n$ nodes 
indexed by integers in $\mathbb{Z}_{n}$, where $\mathbb{Z}_{n} := \{0, 1, ..., n-1 \}$.
Any two nodes in $K_{n}$ are adjacent to each other.
Since a generalized $\ell$-dimensional Hypercube $\mathcal{H}(n_1, ..., n_\ell)$ is the Cartesian product of complete graphs $K_{n_i},  i= 1, 2, ..., \ell$, we have
\begin{displaymath}
\mathcal{H}(n_1, ..., n_\ell) : = K_{n_1} \times ... \times K_{n_\ell}.
\end{displaymath} 
The node set $V(\mathcal{H}(n_1, ..., n_\ell))$ is $\{(v_1,\ldots,v_n): \ v_i \in V(K_{n_i})\}$, where each node of $\mathcal{H}(n_1, ..., n_\ell)$  can be expressed by an $\ell$-dimensional vector, $\textbf{h} =h_1 ... h_\ell \in \mathbb{Z}_{n_1}\times ... \times \mathbb{Z}_{n_\ell}$. 
Two nodes in $\mathcal{H}(n_1, ..., n_\ell)$ are adjacent if their vectors differ only in one component.
Fig. \ref{product} illustrates $\mathcal{H}(3, 3)$, which is the Cartesian product of $K_3$ and $K_3$.
Readers can refer to \cite{bhuyan1984generalized} for more details on Cartesian product of graphs and Hypercube. It has been shown in \cite{full_connections} that $\omega(\mathcal{H}(n_1, ..., n_\ell) )= d^{\ell}-d^{\ell-1}$ when $n_i= d $ for any $i\in \{1, 2, ..., \ell\}$.
\begin{figure}[htbp]
\begin{center}
\includegraphics[width = 3.2 in]{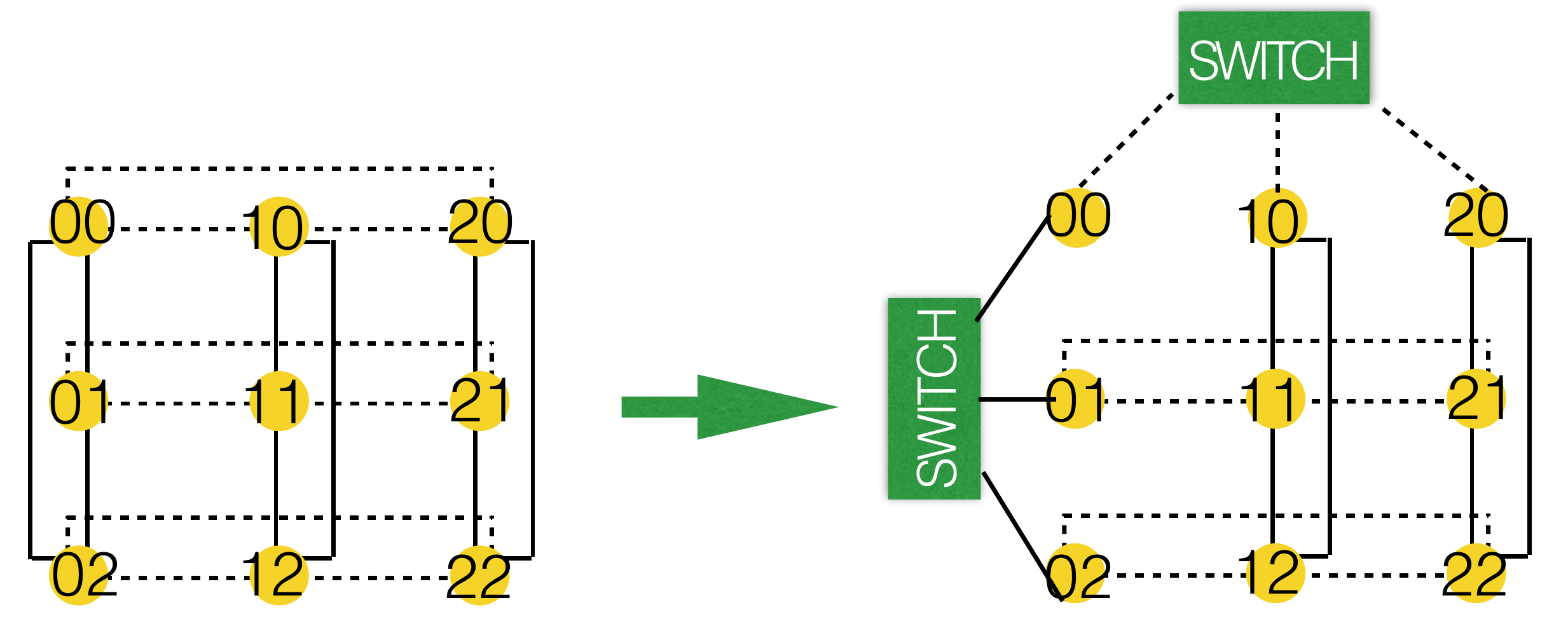}
\caption{An intermediate state of transforming $\mathcal{H}(3, 3)$ into $\mathcal{B}(2, 3)$}
\label{hypercube-to-BCube}
\end{center}
\end{figure}

The main difference between Hypercube and BCube lies in how adjacent nodes are connected to each other. 
In Hypercube adjacent nodes are connected directly by edges; however, in BCube they are connected via common switches.
Such a difference contributes to a reduction in wiring complexity for building large-scale networks. 
Fig. \ref{hypercube-to-BCube} shows an intermediate process of transforming Hypercube into BCube, where each torus is replaced by a switch.

\section{ The BCube Topology}
We use $\mathcal{B}(\ell, d)$ to denote a BCube network which has one host layer and $\ell$ switch layers; this network is constructed by optical $d$-port switches. 
We  index switch layers from $1$ to $\ell$  from bottom to top, and index ports in a switch from $0$ to $d-1$ from left to right. Since we consider full-duplex networks, we assume these ports are bidirectional.
Similar to Hypercube, we denote hosts in $\mathcal{B}(\ell, d)$ by $\ell$-dimensional vectors, $\textbf{h} = h_1... h_\ell \in \mathbb{Z}_d^\ell$, and we denote switches in $\mathcal{B}(\ell, d)$ by $(\ell-1)$-dimensional vectors, $\textbf{s}^k = s^k_1 ... s^k_{\ell-1} \in \mathbb{Z}^{\ell-1}_d$. Here $k$ indicates a switch at the $k$-th switch layer. Hereinafter, we simply use layer to refer to switch layer. 
Fig. \ref{fig.0} and Fig. \ref{fig.1} illustrate the structures of $\mathcal{B}(2, 3)$ and $\mathcal{B}(3, 3)$, respectively. 
Since  $\mathcal{B}(\ell, d)$ consists of switches and hosts, we have  $V(\mathcal{B}(\ell, d)) = V_s(\mathcal{B}(\ell, d)) \cup V_h(\mathcal{B}(\ell, d))$, where $V_s(\mathcal{B}(\ell, d))$ is the switch set and $V_h(\mathcal{B}(\ell, d))$ is the host set. 
By letting $N_h$, $N_s$ and $N_\alpha$ be the number of hosts, switches and arcs, we have $N_h = d^\ell$ and  $N_\alpha = 2\ell N_h$ since each host has $\ell$ bidirectional links. 
\begin{figure}[ht]
\centering
\includegraphics[width = 3.2 in]{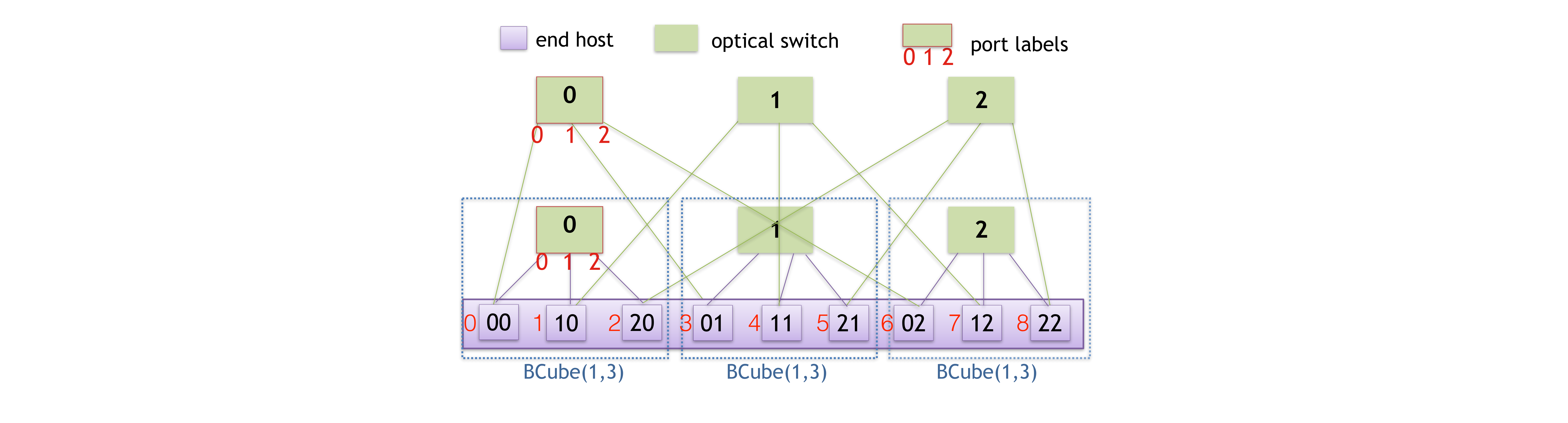}
\caption[Recursive Construction:]
 {The bottom layer in purple is the host layer; the remaining layers are switch layers. This BCube has three built-in $\mathcal{B}(1, 3)$, each of which is in a blue dashed rectangle.
 }
   \label{fig.0}
\end{figure}

\noindent \textbf{Recursive construction:} 
$\mathcal{B}(1, d)$ is constructed by $d$ hosts and one switch, where these hosts are all connected directly to this switch. 
For $ \ell > 1$,  $\mathcal{B}(\ell, d)$ is constructed by $d$ $\mathcal{B}(\ell-1, d)$, where hosts in different  $\mathcal{B}(\ell-1, d)$ are connected by switches at the $\ell$-th layer.

In particular, we refer to these $\mathcal{B}(\ell-1, d)$ as \textit{built-in} BCubes of $\mathcal{B}(\ell, d)$ since they are inside $\mathcal{B}(\ell, d)$.
For example, the three $\mathcal{B}(2, 3)$ in Fig. \ref{fig.1} are built-in BCubes of $\mathcal{B}(3, 3)$.
Accordingly, we can split a host vector in $\mathcal{B}(\ell, d)$ into two parts, i.e., $\textbf{h}  = h_{1:\ell-1} h_{\ell}$, where $h_{1:\ell-1} = h_1...h_{\ell-1}$ is equal to a host vector in $\mathcal{B}(\ell-1, d)$, and $h_{\ell}$ can identify different built-in $\mathcal{B}(\ell-1, d)$.
In particular, $h_\ell$ is the same for all hosts belonging to the same build-in BCube.
For simplicity, we let $h_{\ell}$ be the index of a built-in $\mathcal{B}(\ell-1, d)$.
For example, the indices of  three built-in $\mathcal{B}(2,3)$ from left to right in Fig. \ref{fig.1} are $0$, $1$, and $2$, respectively. 

In $\mathcal{B}(\ell, d)$, a link exists only between a host and a switch; a host is physically connected to $\ell$ switches at different layers; and a switch is physically connected to $d$ hosts. 
We describe two hosts are \textit{neighbors} if their vectors differ in only one component.
If two neighbor hosts differ in the $k$-th components, the two hosts are connected directly to a common switch at the $k$-th layer. 
Mathematically, a physical link exits between a host $\textbf{h}$ and a switch $\textbf{s}^k$ if and only if the following equation is satisfied. 
\begin{equation}
\label{connection_def}
s^k_1... s^k_{\ell -1} = h_1 ... h_{k-1} h_{k+1} ... h_\ell.
\end{equation}
If a host is connected to the $h_k$-th port of switch $\textbf{s}^k$ via a link,  we can infer the vector of this host according to (\ref{connection_def}).
For example,  in Fig. \ref{fig.1}, switch $\textbf{s}^3 = 20$ is physically connected to hosts $2 0 \underline{0}$, $2 0 \underline{1}$ and $2 0 \underline{2}$ via its ports $0$, $1$, and $2$, respectively; the underlined numbers are determined by the corresponding port indices.
Moreover, switch $\textbf{s}^2 =2 0$ is directly connected to hosts $2 \underline{0} 0$, $2  \underline{1} 0$, and $2 \underline{2} 0$ via its ports $0, 1$, and $ 2$, respectively; switch $\textbf{s}^1= 2 0$ at the 1st layer is directly connected to hosts $\underline{0} 2 0$, $\underline{1}  2  0$ and $\underline{2} 2 0$ via its ports $0$, $1$, and $2$, respectively.

A directed link is referred to as an \textit{uplink} if its direction is from a host to a switch and is referred to as a \textit{downlink} if its direction is from a switch to a host. 
We use $\textbf{u}^k = u^k_1 ... u^k_\ell$ and $\textbf{d}^k=d^k_1 ... d^k_\ell$ to denote an uplink and a downlink, respectively. Here $k$ indicates that the corresponding directed link is connected to a switch at layer $k$. Note that the value of  $\textbf{u}^k $ (or $\textbf{d}^k $) is determined by the vector of its connected host. Hence, we have $\textbf{u}^k= \textbf{h}$ ($\textbf{d}^k = \textbf{h}$) if uplink $\textbf{u}^k$ (downlink $\textbf{d}^k$) is connected to host $\textbf{h}$. 

\begin{figure*}[ht]
 \centering
 \includegraphics[width= 7.2 in]{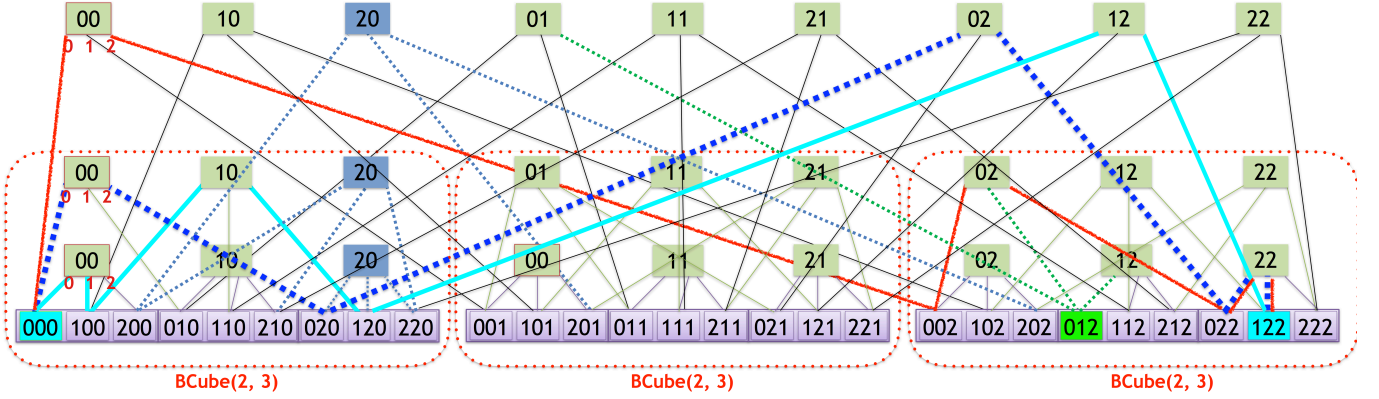}
 \caption[Topology of $ \mathcal{B}(3, 3)$]
 {$\mathcal{B}(3, 3)$ is constructed by three $\mathcal{B}(2, 3)$. Purple and blue rectangles represent  hosts and switches, respectively.}
 \label{fig.1}
\end{figure*}

\noindent \textbf{Routing in BCube:}
For an S-D host pair ($\textbf{h}^s, \textbf{h}^d$), we use the Hamming distance  $\| \textbf{h}^s - \textbf{h}^d\|_0 $ to measure the distance between $\textbf{h}^s$ and $\textbf{h}^d$.
If the Hamming distance of pair $(\textbf{h}^s, \textbf{h}^d)$ is $m$,  any of its shortest dipaths has $m$ \textit{hops}. 
A hop is defined here as from a host to a switch, and then back to a host.
Note that neighbor hosts in a BCube network can be reached through one hop. 
Let $\textbf{h}\to \textbf{h}^\prime$ denote the hop from host $\textbf{h}$ to host $\textbf{h}^\prime$. 
Then, we can represent a shortest dipath of $m$ hops from host $\textbf{h}^s$ to host $\textbf{h}^d$ as follows. 
\begin{displaymath}
\textbf{h}^s \to \textbf{h}^1 \to ...  \to \textbf{h}^{m-1} \to \textbf{h}^d.
\end{displaymath}
Each hop in a shortest dipath corresponds to one different component between $\textbf{h}^s$ and $ \textbf{h}^d$. 
In particular, if  $\textbf{h}^s$ and $ \textbf{h}^d$ differs in the $k$-th component, there exits a hop that shall traverse a switch at layer $k$; thus, we describe the hop fixes the $k$-th component. 
We define \textit{component-fixing order} as the order of fixing different components by a sequence of hops. 
For an S-D pair of distance $m$, there are $m!$ different component-fixing order, where each order uniquely determines a shortest path. 
In view of this, we conclude that BCube maintains a high degree of path diversity.  
For example, in Fig. \ref{fig.1}, S-D pair $(0 0  0, 1 2 2 )$ has the following six shortest dipaths with each corresponding to an unique component-fixing order.
\begin{equation}
\nonumber
\begin{cases}
0 0 0 \to 1 0 0 \to 1 2 0 \to 1 2 2,\\
0 0 0 \to 1 0 0 \to 1 0 2 \to 1 2 2, \\
0 0 0 \to 0 2 0 \to 1 2 0 \to 1 2 2, \\
0 0 0 \to 0 2 0 \to 0 2 2 \to 1 2 2, \\
0 0 0 \to 0 0 2 \to 1 0 2 \to 1 2 2, \\
0 0 0 \to 0 0 2 \to 0 2 2 \to 1 2 2, 
\end{cases}
\end{equation}
In particular,  dipaths that  follow the descending component-fixing order are called \textit{descending} dipaths.
For example, the descending dipath of  $(0 0 0, 1 2 2 )$ is $0 0 0 \to 0 0 2 \to 0 2 2 \to 1 2 2$.
Without loss of generality, we conduct the analysis using descending dipaths in the remaining of this paper.
\begin{defn}
For any positive integers $\ell$ and $d$, let $R^{*}(\ell,d)$ denote the host-to-host routing in $\mathcal{B}(\ell, d)$ where all involved dipaths are descending.
\end{defn}

\section{Forwarding and Optical Indices}
In this section, we first derive the exact value of $\pi(\mathcal{B}(\ell, d))$.
Then we propose an oblivious RWA for a $\mathcal{B}(\ell, d))$.
Finally, we derive the upper and lower bound of $\omega(\mathcal{B}(\ell, d))$. 
We divide the analysis of the host-to-host communication into two parts: path allocation and wavelength assignment. 
In the part of path allocation, we aim to find a set of dipaths that minimizes the maximum link load; 
in the part of wavelength assignment, we aim to minimize the usage of wavelengths.

\subsection{Forwarding Index}
To begin with, we first present an existing result on the forwarding index of a graph (see e.g., \cite{forwarding_bound}). 
\begin{lem}
\label{lem1}
For a given $G= (V, A)$, we have
\begin{displaymath}
 \pi(G) \ge \frac{N_v ( N_v -1) \bar{d}(G)}{N_\alpha},
\end{displaymath}
 where
\begin{displaymath}
\bar{d}(G) = \frac{1}{N_v(N_v-1)}\sum_{x, y\in V, x\neq y}d_{x, y}.
\end{displaymath}
Here $N_v$ and $N_\alpha$ refer to  the number of nodes and arcs, respectively, and $d_{x, y}$ denotes the distance between node $x$ and node $y$. Note that $\bar{d}(G)$ denotes the average distance over all nodes.
\end{lem}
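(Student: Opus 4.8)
The plan is to establish the bound by a double-counting (averaging) argument that compares the total load accumulated over \emph{all} arcs against the smallest total any routing could possibly achieve, and which is valid for an arbitrary routing so that the minimum defining $\pi(G)$ inherits it directly.

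First I would fix an arbitrary routing $R \in \mathcal{R}$ and examine the summed load over every arc, $\sum_{\alpha_{x,y} \in A} \pi(G, R, \alpha_{x,y})$. Counting the (dipath, arc) incidences in the other order, each dipath $P_{s,d} \in R$ adds exactly $1$ to the load of each arc it uses, hence contributes its own length $|P_{s,d}|$ (its number of arcs). This gives the key identity
\[
\sum_{\alpha \in A} \pi(G, R, \alpha) = \sum_{P_{s,d}\in R} |P_{s,d}|.
\]

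Next I would estimate both sides. On the right, any dipath from $s$ to $d$ has at least $d_{s,d}$ arcs, so the right-hand side is bounded below by $\sum_{x,y\in V,\, x\neq y} d_{x,y} = N_v(N_v-1)\,\bar d(G)$, using the definition of $\bar d(G)$. On the left, no arc carries more than $\pi(G,R) = \max_{\alpha} \pi(G,R,\alpha)$ dipaths, so the left-hand side is at most $N_\alpha\,\pi(G,R)$. Combining the two estimates yields $N_v(N_v-1)\,\bar d(G) \le N_\alpha\,\pi(G,R)$, that is, $\pi(G,R) \ge N_v(N_v-1)\bar d(G)/N_\alpha$. Because the right-hand side does not depend on $R$, taking the minimum over all routings $R \in \mathcal{R}$ gives the asserted inequality for $\pi(G)$.

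There is no genuine obstacle here: the bound uses nothing beyond the facts that a dipath cannot be shorter than the graph distance between its endpoints and that no arc can exceed the maximum load, so the estimate is tight only in an averaged sense. The one point deserving care is the bookkeeping in the double count --- verifying that summing loads over arcs and summing lengths over dipaths enumerate the same incidences --- together with the routing convention that $R$ supplies exactly one dipath per ordered pair of distinct nodes, so that the lower sum ranges over precisely the $N_v(N_v-1)$ ordered pairs appearing in $\bar d(G)$.
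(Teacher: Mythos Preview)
Your argument is correct and is precisely the standard averaging proof of this inequality. The paper itself does not prove Lemma~\ref{lem1}; it simply quotes it as an existing result from~\cite{forwarding_bound}, so there is nothing further to compare.
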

Recall that nodes in BCube have two types: switches and hosts, and we only assign dipaths to host pairs.
In order to derive a lower bound of $\pi(\mathcal{B}(\ell, d))$ based on Lemma \ref{lem1}, we replace $N_v$ with $N_h$ which is  the number of hosts in $\mathcal{B}(\ell, d)$.
Besides, we should use the average distance only over host nodes (average host distance for short).  
Note that the average host distance is  given by 
\begin{equation}
\label{avg_distance}
\bar{d}(\mathcal{B}(\ell, d)) = 2\ell \frac{d-1}{d}\frac{N_h}{N_h-1},
\end{equation}
according to Theorem 4 of  \cite{guo2009bcube}.
We then have  
\begin{equation}
\label{forwarding_lower}
\pi(\mathcal{B}(\ell, d)) \ge  \frac{N_h(N_h-1) \bar{d}(\mathcal{B}(\ell, d))}{N_\alpha}  =  d^{\ell} - d^{\ell-1}.
\end{equation}
In what follows, we shall show $d^\ell-d^{\ell-1}$ is also an upper bound of $\pi(\mathcal{B}(\ell, d))$.

\begin{thm}
\label{thm1}
For any positive integers $\ell$ and $d$, one has 
\begin{displaymath}
\pi (\mathcal{B}(\ell, d)) =  d^{\ell} - d^{\ell-1}.
\end{displaymath}
\end{thm}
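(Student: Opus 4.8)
The lower bound $\pi(\mathcal{B}(\ell,d)) \ge d^{\ell}-d^{\ell-1}$ is already in hand from (\ref{forwarding_lower}), so the plan is to prove the matching upper bound by exhibiting an explicit routing that attains it. The natural candidate is the descending routing $R^{*}(\ell,d)$, and I would in fact prove the stronger statement that \emph{every} arc carries exactly $d^{\ell}-d^{\ell-1}$ descending dipaths. Since $R^{*}(\ell,d)$ is a legitimate host-to-host routing, the maximum link load argument then gives $\pi(\mathcal{B}(\ell,d)) \le \pi(\mathcal{B}(\ell,d), R^{*}(\ell,d)) = d^{\ell}-d^{\ell-1}$, closing the gap.

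The heart of the argument is a counting step. Fix an uplink $\textbf{u}^k$ attached to host $\textbf{h} = h_1\cdots h_\ell$ and determine which S-D pairs $(\textbf{h}^s, \textbf{h}^d)$ route a descending dipath through it. Because a descending dipath fixes the differing components from index $\ell$ down to index $1$, passing through $\textbf{u}^k$ at $\textbf{h}$ means the dipath is currently at $\textbf{h}$ and is about to fix component $k$. This pins down the endpoints precisely: for $j>k$ the component has already been fixed, so $h^d_j = h_j$ is forced while $h^s_j$ is free ($d$ choices each); component $k$ is the one being fixed, so $h^s_k = h_k$ while $h^d_k \ne h_k$ ($d-1$ choices); and for $j<k$ the component is still untouched, so $h^s_j = h_j$ is forced while $h^d_j$ is free ($d$ choices each). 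Multiplying yields $d^{\ell-k}\cdot(d-1)\cdot d^{k-1} = (d-1)d^{\ell-1} = d^{\ell}-d^{\ell-1}$, independent of both the layer $k$ and the host $\textbf{h}$.

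An entirely analogous computation handles a downlink $\textbf{d}^k$ attached to $\textbf{h}$: here the dipath has just finished fixing component $k$ and arrived at $\textbf{h}$, which forces $h^d_j = h_j$ for all $j \ge k$ and $h^s_j = h_j$ for all $j < k$, leaves $h^s_j$ free for $j > k$, leaves $h^d_j$ free for $j < k$, and imposes $h^s_k \ne h_k$; the same product $(d-1)d^{\ell-1}$ results. Hence the load is uniform across all $N_\alpha = 2\ell d^\ell$ arcs, and the maximum equals $d^{\ell}-d^{\ell-1}$. As a consistency check one can observe that the total load equals $\sum_{(\textbf{h}^s,\textbf{h}^d)} 2\|\textbf{h}^s - \textbf{h}^d\|_0 = N_h(N_h-1)\bar{d}(\mathcal{B}(\ell,d))$, so dividing by $N_\alpha$ recovers exactly $d^{\ell}-d^{\ell-1}$; the uniform per-arc load must therefore coincide with this average, which is reassuring.

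The main obstacle I anticipate is purely bookkeeping: correctly translating the phrase ``the descending dipath is at $\textbf{h}$ and is about to fix (or has just fixed) component $k$'' into the exact pattern of forced versus free components of $\textbf{h}^s$ and $\textbf{h}^d$, and checking that the constraint $\textbf{h}^s \ne \textbf{h}^d$ is automatically met (it is, since component $k$ is forced to differ). Once that dictionary between arc position and coordinate freedom is set up, the count collapses to a single product, and it is the \emph{uniformity} of the load — not merely a bound on its maximum — that makes the proof clean and matches the lower bound on the nose.
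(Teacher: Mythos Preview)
Your proposal is correct, and the counting is done accurately: for an uplink at layer $k$ attached to host $\textbf{h}$, the dictionary ``$h^d_j=h_j$ for $j>k$, $h^s_j=h_j$ for $j\le k$, $h^d_k\neq h_k$'' is exactly right for descending dipaths, and the product $(d-1)d^{\ell-1}$ follows; the downlink case is symmetric.

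Your route differs from the paper's. The paper proves the upper bound by induction on $\ell$: it computes the load on top-layer arcs directly (this is your count in the special case $k=\ell$, where the source is fully determined by the uplink), and then handles the lower layers via the inductive hypothesis, observing that each dipath of $R^*(k,d)$ inside a fixed built-in $\mathcal{B}(k,d)$ is the tail of exactly $d$ dipaths in $R^*(k+1,d)$ (one for each possible value of $h^s_{k+1}$), so the load inside a built-in BCube is $d$ times the load in $\mathcal{B}(k,d)$. You instead count directly at an \emph{arbitrary} layer $k$, which collapses the whole induction into a single product and makes the uniformity of the load over all arcs explicit in one line. Your argument is marginally shorter and more transparent; the paper's inductive viewpoint, on the other hand, dovetails with the recursive construction of BCube and with the later Lemma~\ref{link-disjoint permutation routes}, which is also proved by the same induction pattern. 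Either approach is perfectly adequate here.
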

\begin{proof}
We proceed by induction on $\ell$. The steps are as follows. 
\begin{enumerate}[(1)]
\item We first show $\pi(\mathcal{B}(1, d)) = d-1$;
\item we \textit{assume} $\pi(\mathcal{B}(k, d )) = d^k - d^{k-1}$;
\item we \textit{prove} $\pi(\mathcal{B}(k+1, d)) =d^{k+1} - d^{k}$. 
\end{enumerate}
Note that  $\mathcal{B}(k+1, d)$ contains $d$ built-in $\mathcal{B}(k,d)$.

In $\mathcal{B}(1,d)$, there are $d$ hosts and only one switch.  Since every host plays as the role of source and destination exactly $d-1$ times, and the shortest dipath for an S-D pair is unique, we clearly have $\pi(\mathcal(1,d)) \leq d-1$.
Together with (\ref{forwarding_lower}), we have $\pi(\mathcal{B}(1, d))= d-1$.
Next, we assume the result holds for  $\ell=k$ and then prove the result for $\ell=k+1$.
 We first show that the maximum link load in layer  (k+1) is equal to $d^{k+1} -d^k$;  we then show that the maximum link load in each built-in BCube is equal to $d^{k+1} -d^k $. 
Consider an arbitrary uplink  $\textbf{u}^{k+1}$. 
If this uplink is traversed by the dipath of an S-D pair $(\textbf{h}^s, \textbf{h}^d)$ in $R^*(k+1, d)$, we have
\begin{equation}
\label{uplink_traverse_condition}
\textbf{h}^s =  \textbf{u}^{k+1}\  \textup{and}  \ h^d_{k+1}\neq h^s_{k+1}.
\end{equation}
Furthermore, the number of satisfied S-D pairs in $R^*(k+1, d)$ is $d^k(d-1) = d^{k+1} -d^k$ S-D pairs due to the following facts.
(1) Uplink  $\textbf{u}^{k+1}$ determines the source of these pairs. (2) Any of the first $k$ components of  $\textbf{h}^d$ has $d$ choices, whereas the last component has $d-1$ choices. 
In other words, the load of uplink $\textbf{u}^{k+1}$ is $d^{k+1} -d^k$. 

Similarly, consider an arbitrary downlink $\textbf{d}^{k+1}$. 
If downlink $\textbf{d}^{k+1}$ is traversed by the dipath of an S-D pair $(\textbf{h}^s, \textbf{h}^d)$ in  $R^*(k+1, d)$, we have
\begin{equation}
\label{downlink_traverse_condition}
h^s_1... h^s_k h^d_{k+1}  = \textbf{d}^{k+1}\ \textup{and} \ h^d_{k+1} \neq h^s_{k+1}.
 \end{equation}
Furthermore, the number of satisified S-D pairs in $R^*(k+1, d)$ is $(d-1)d^k = d^{k+1} -d^k$ due to the following facts. (1) Downlink $\textbf{d}^{k+1}$ determines the first $k$ components of $\textbf{h}^s$, whereas the last component of $\textbf{h}^s$ has $d-1$ choices. (2) Downlink $\textbf{d}^{k+1}$ determines the last component of $\textbf{h}^d$, whereas any of the first $k$ components of $\textbf{h}^d$ has $d$ choices. 
In other words, the load of downlink $\textbf{d}^{k+1}$ is  $d^{k+1} -d^k$.
Together with (\ref{forwarding_lower}), we infer that the maximum link load in layer $k+1$ is $d^{k+1} -d^k$.

Next, we show that the maximum link load in each built-in BCube is also $d^{k+1}-d^{k} $.
First, we consider the dipath of an S-D pair $( h^s_1 ... h^s_k,   h^d_1 ... h^d_k )$ in  $R^*(k, d)$. 
The same dipath becomes a part of the dipath in $R^*(k+1, d)$ whose source and destination are given by
\begin{equation}
\small
\label{dorSet}
(h^s_1 \cdots h^s_k x, h^d_1 \cdots h^d_k y)
\end{equation}
where $x$  is the index of the  built-in BCube in $\mathcal{B}(k+1, d)$  that the source belongs to, and $y$ is the index of the built-in BCube that the destination belongs to.
The above statement applies to any dipath in $R^*(k, d)$.
Fix $y$, i.e., a built-in BCube.   
Since $x$ has a range $\{0, ..., d-1\}$, we learn that the link load in a built-in BCube is $d^{k+1} -d^k$, which is $d$ times the link load in $\mathcal{B}(k, d)$. 
Together with (\ref{forwarding_lower}), we infer that the maximum link load in a built-in BCube is $d^{k+1} -d^k$.
This completes the proof.
\end{proof}

\subsection{The Proposed RWA scheme } 
To analyze the RWA problem in BCube, we introduce a specific pattern of permutation routing called \textit{Cyclic Permutation Routing} (CPR). 
We first show that BCube provides link-disjoint dipaths for a CPR. 
We then propose an oblivious RWA scheme and derive upper bounds of the optical index. 

A permutation here is referred to as a set of S-D pairs wherein each host is a source and a destination of exactly one S-D pair. 
\begin{defn}
For  a given $\ell$-dimensional vector $p_1 ... p_\ell \in \mathbb{Z}^\ell_d$, we define a permutation, denoted by $P(p_1 ... p_{\ell})$, as follows. 
\begin{equation}
\label{permutation}
\begin{split}
P(p_1 ... p_\ell ) :=& \{(\textbf{h}^s, \textbf{h}^d): \\
& \textbf{h}^s\in \mathbb{Z}^{\ell}_d,  h^d_i  = (h^s_i +p_i)_d , i =1, 2, ..., \ell \},
\end{split}
\end{equation}
where $(x)_d  := x \mod d $, and $ h^d_i  = (h^s_i +p_i)_d $ also implies that  $p_i = (h^d_i -h^s_i)_d$ since  $h^d_i, h^s_i, \textup{and}  \ p_i \in \mathbb{Z}_d$.
\end{defn}
In particular, we call $P(0 ... 0)$ the \textit{zero} permutation where $p_i=0$ for all $i$.
\begin{lem}
\label{communication decomposition}
All host pairs in $\mathcal{B}(\ell, d)$ can be classified into $d^\ell$ CPRs with $p_1...p_\ell$ ranging over $\mathbb{Z}^\ell_d$.
\end{lem}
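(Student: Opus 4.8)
The plan is to view the host set $V_h(\mathcal{B}(\ell,d)) = \mathbb{Z}_d^\ell$ as an abelian group under componentwise addition modulo $d$, and to read each permutation $P(p_1 \ldots p_\ell)$ as the translation $\textbf{h}^s \mapsto \textbf{h}^s + (p_1 \ldots p_\ell)$ on this group. Under this reading the lemma reduces to the claim that the ordered host pairs split according to their componentwise \emph{displacement}, of which there are exactly $d^\ell$ possible values.

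First I would attach to each ordered host pair $(\textbf{h}^s, \textbf{h}^d)$ its displacement vector $p_1 \ldots p_\ell$ defined by $p_i := (h^d_i - h^s_i)_d$ for $i = 1, \ldots, \ell$. Comparing with (\ref{permutation}), one sees immediately that $(\textbf{h}^s, \textbf{h}^d) \in P(p_1 \ldots p_\ell)$ holds precisely when $p_i = (h^d_i - h^s_i)_d$ for every $i$; that is, the displacement of a pair is exactly the label of the only CPR that can contain it. This settles coverage at once: every pair lies in the CPR indexed by its own displacement.

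Next I would verify disjointness. If a single pair belonged to both $P(p_1 \ldots p_\ell)$ and $P(p_1' \ldots p_\ell')$, then $(h^s_i + p_i)_d = h^d_i = (h^s_i + p_i')_d$ for each $i$, and since $p_i, p_i' \in \mathbb{Z}_d$ this forces $p_i = p_i'$ throughout; hence distinct CPRs meet in no pair. As the label $p_1 \ldots p_\ell$ ranges over all of $\mathbb{Z}_d^\ell$, a set of cardinality $d^\ell$, the $d^\ell$ CPRs therefore partition the ordered host pairs, which is the assertion. Along the way it is worth recording that each $P(p_1 \ldots p_\ell)$ is a bona fide permutation: the map $\textbf{h}^s \mapsto \textbf{h}^s + (p_1 \ldots p_\ell)$ is invertible (subtract the same vector mod $d$), so every host serves once as a source and once as a destination.

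I do not anticipate a genuine obstacle here; the argument is clean orbit-style bookkeeping on the group $\mathbb{Z}_d^\ell$. The only point meriting care is the intended scope of ``all host pairs.'' Counting \emph{every} ordered pair --- including the self-pairs $(\textbf{h}, \textbf{h})$, which together constitute the zero permutation $P(0 \ldots 0)$ --- yields precisely the $d^\ell$ classes matching the $d^\ell$ labels in $\mathbb{Z}_d^\ell$; the host-to-host routing of interest then draws only on the $d^\ell - 1$ nonzero CPRs, since it excludes the case $\textbf{h}^s = \textbf{h}^d$.
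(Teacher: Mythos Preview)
Your proof is correct and follows essentially the same approach as the paper: assign to each pair its displacement vector $p_i=(h^d_i-h^s_i)_d$ and observe that this places the pair in exactly the CPR $P(p_1\ldots p_\ell)$. You supply additional detail (explicit disjointness and the remark on the zero permutation) that the paper omits, but the underlying idea is identical.
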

\begin{proof}
Given an arbitrary S-D pair $(\textbf{h}^s, \textbf{h}^d)$, it must belong to some $ P(p_1...p_\ell) $ whose $p_1, ..., p_\ell$ is equal to  
\begin{displaymath}
p_i = (h^d_i -h^s_i)_d, \ i=1, 2, ..., \ell.
\end{displaymath}
Thus, this lemma follows.
\end{proof}

\begin{defn}
For a given $P(p_1 ... p_\ell)$ in $\mathcal{B}(\ell, d)$, we define a CPR, denoted by $R(p_1 ... p_\ell )$, as follows.
\begin{displaymath}
R{(p_1 ... p_\ell )} := \{P_{\textbf{h}^s, \textbf{h}^d}: (\textbf{h}^s, \textbf{h}^d)\in P(p_1 ... p_\ell ) \},
\end{displaymath}
where $P_{\textbf{h}^s, \textbf{h}^d}$ is a descending path.
\end{defn}
In particular, dipaths in $R(p_1 ... p_{\ell-1} 0)$, where none of layer-$\ell$ links is involved, can be classified into $d$ different sets such that dipaths in each set consist of links that only belong to some built-in BCube. 
Furthermore, each of such sets is isomorphic to $R(p_1...p_{\ell-1})$, a dipath set in $\mathcal{B}(\ell-1, d)$.
Next, we show dipaths in $R{(p_1...p_\ell )}$ are link disjoint. 

\begin{lem}
\label{link-disjoint permutation routes}
Given $\mathcal{B}(\ell, d)$ and $P(p_1 ... p_\ell )$, we have dipaths in $R{(p_1 ... p_\ell )}$ are link disjoint.
\end{lem}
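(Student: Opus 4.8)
The plan is to prove the statement by induction on the number of switch layers $\ell$, leaning on the recursive construction of $\mathcal{B}(\ell,d)$ and on the decomposition of $R(p_1\ldots p_\ell)$ into sub-routings on the built-in BCubes recorded just before the statement. The guiding observation is that, because $P(p_1\ldots p_\ell)$ applies the \emph{same} shift $p_\ell$ to component $\ell$ of every source, either no dipath touches layer $\ell$ (when $p_\ell=0$) or every dipath crosses layer $\ell$ exactly once at its first hop (when $p_\ell\neq 0$); in both cases each dipath, after its layer-$\ell$ hop, remains inside a single built-in BCube, which is what makes the inductive reduction go through.

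For the base case $\ell=1$ I would argue directly: in $\mathcal{B}(1,d)$ each dipath of $R(p_1)$ is a single hop through the unique switch, using uplink $\mathbf{u}^1=\mathbf{h}^s$ and downlink $\mathbf{d}^1=\mathbf{h}^d$. Since a permutation pairs distinct sources with distinct destinations, the uplinks are pairwise distinct and so are the downlinks, giving link-disjointness (the subcase $p_1=0$ is vacuous, as no arc is used).

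For the inductive step, assuming the result for $\mathcal{B}(\ell-1,d)$, I would partition the arcs of $\mathcal{B}(\ell,d)$ into the layer-$\ell$ arcs and the arcs internal to the built-in BCubes (layers $1,\ldots,\ell-1$), and handle them separately. First, the layer-$\ell$ arcs: a dipath with source $\mathbf{h}^s$ uses uplink $\mathbf{u}^\ell=\mathbf{h}^s$ and downlink $\mathbf{d}^\ell=h^s_1\cdots h^s_{\ell-1}(h^s_\ell+p_\ell)_d$. Distinct sources yield distinct uplinks immediately, and the map $\mathbf{h}^s\mapsto \mathbf{d}^\ell$ is injective because $h^s_\ell\mapsto(h^s_\ell+p_\ell)_d$ is a bijection of $\mathbb{Z}_d$; hence each layer-$\ell$ arc is used by at most one dipath. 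Second, the internal arcs: the dipaths entering the built-in BCube indexed by $j$ are exactly those whose source has $h^s_\ell=(j-p_\ell)_d$, and restricting such a dipath to that built-in BCube (i.e.\ projecting out the now-fixed component $\ell$) yields a descending dipath of $\mathcal{B}(\ell-1,d)$ with shift $p_1\ldots p_{\ell-1}$. As the first $\ell-1$ coordinates of the source range over all of $\mathbb{Z}_d^{\ell-1}$, this restricted family is precisely a copy of $R(p_1\ldots p_{\ell-1})$, which is link-disjoint by the induction hypothesis; and since distinct built-in BCubes share no layer-$(\le\ell-1)$ arcs, disjointness holds across all of them. Combining the two parts gives link-disjointness of $R(p_1\ldots p_\ell)$.

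I expect the main obstacle to be the careful bookkeeping of the reduction in the second part: verifying that the arcs of each built-in BCube are traversed exactly as in $R(p_1\ldots p_{\ell-1})$, so that the induction hypothesis genuinely applies, and confirming that no dipath re-enters layer $\ell$ after its initial hop (which is where the descending order of component-fixing is essential). The injectivity of the downlink map, which is the one place the cyclic permutation structure is actually used, is the single algebraic point worth stating explicitly; the rest of the argument is structural.
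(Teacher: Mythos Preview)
Your proposal is correct and follows essentially the same approach as the paper: induction on $\ell$, with the inductive step split into (i) injectivity of the source-to-uplink and source-to-downlink maps at layer $\ell$, and (ii) reduction of the remaining arcs in each built-in BCube to a copy of $R(p_1\ldots p_{\ell-1})$ handled by the induction hypothesis. Your write-up of the downlink injectivity and the restriction-to-built-in-BCube step is in fact somewhat more explicit than the paper's, but the structure and key ideas coincide.
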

\begin{proof}

We proceed by induction on $\ell$. The steps are as follows.
\begin{enumerate}[(1)]
\item We first prove the result holds in $\mathcal{B}(k, d)$ when $k=1$;
\item we assume the result holds in $\mathcal{B}(k, d)$ for some $k$ in  $\{1, 2, ...., \ell-1\}$;
\item we prove the result holds in  $\mathcal{B}(k+1, d)$.
\end{enumerate}
If $\ell=1$, we have each dipath in $R(p_1)$, $p_1\neq 0$, consists of only one uplink and only one downlink; the uplink connects a source and the downlink connects a destination. 
Since a host is a source and a destination of exactly one pair in a CPR,  each directed link is traversed by only one dipath in $R(p_1)$. In other words, dipaths in $R(p_1)$ are link disjoint. 
Next, we show this result holds for $\ell = k+1$ based on the assumption for $\ell =k$. 

Consider an uplink $\textbf{u}^{k+1}$ in $\mathcal{B}(k+1, d)$.
If the dipath of an S-D pair $(\textbf{h}^s, \textbf{h}^d)$ traverses this uplink, we have
\begin{equation}
\textbf{u}^{k+1}  = \textbf{h}^s, h^s_{k+1} \neq h^d_{k+1}.
\end{equation}
Here $h^s_{k+1} \neq h^d_{k+1}$ implies that $p_{k+1} \neq 0$.
Consider a downlink $\textbf{d}^{k+1}$. 
If the dipath of an S-D pair $(\textbf{h}^s, \textbf{h}^d)$ traverses this downlink, we have
\begin{equation}
h^s_1 ... h^s_{k} h^d_{k+1}  =\textbf{d}^{k+1} \ \textup{and} \  h^s_{k+1} \neq h^d_{k+1}.
\end{equation}
According to $h^s_{k+1}  = (h^d_{k+1} -p_{k+1})_d =(d^{k+1}_{k+1} - p_{k+1})_d $ in $P(p_1 ... p_{k+1})$, we learn that  uplink $\textbf{u}^{k+1}$ (or downlink $\textbf{d}^{k+1}$) uniquely determines the source of a pair. 
We thus infer by the CPR  definition that dipaths of $R(p_1 ... p_{k+1})$ collide on neither uplinks nor downlinks in layer $k+1$. If $p_{k+1} =0$, the above statement holds naturally because dipaths in $R(p_1 ... p_{k} 0)$ do not traverse any links in layer $k+1$.
Next, we further show that dipaths in $R(p_1 ... p_{k+1})$ do not collide inside each  built-in BCube. 

Consider an S-D pair $(\textbf{h}^s, \textbf{h}^d)$ in $P(p_1 ... p_k p_{k+1})$ and a host $\textbf{h}^I$ with $\textbf{h}^I = h^s_1 ... h^s_k h^d_{k+1}$. 
If $p_{k+1} \neq 0$, the descending dipath $P_{\textbf{h}^s, \textbf{h}^d}$ arrives at host $\textbf{h}^I$ after its first hop; otherwise, $\textbf{h}^I$ is its source node. 
One can check that each $P_{\textbf{h}^s, \textbf{h}^d}$ in $R(p_1 ... p_k p_{k+1})$ has a distinct $\textbf{h}^I$, and  each descending dipath $P_{\textbf{h}^I, \textbf{h}^d}$  belongs to $R(p_1 ... p_k 0)$.
On the other hand, $R(p_1 ... p_k 0)$ can be divided into $d$ dipath sets, where each set is isomorphic to $R(p_1 ... p_k )$. Recall that we have assumed dipaths in $R(p_1 ... p_k )$ are  link disjoint. 
We can infer that dipaths in $R(p_1 ... p_k p_{k+1})$ do not collide in each built-in BCube. Thus we  finish the proof.
\end{proof}

On the basis of  Lemma \ref{link-disjoint permutation routes}, we divide RWA into two parts: path allocation and wavelength assignment. 
In the part of path allocation, we use descending dipaths only. 
In the part of  wavelength assignment, we  first indicate wavelengths by $\ell$-dimensional vectors $w_1... w_\ell$ in $\mathbb{Z}^\ell_d$, 
and then assign all dipaths in $R(p_1 ... p_\ell)$ with a single wavelength whose vector is given by
\begin{equation}
\label{RWA_1}
 w_1... w_\ell = p_1...p_\ell
 \end{equation}
Algorithm 1 illustrates more details on the proposed wavelength assignment.
It is easy to see the wavelength assignment scheme in (\ref{RWA_1}), besides its simplicity, guarantees nonblocking lightpaths for a host-to-host traffic.
\begin{algorithm}[!t]
 \caption{An oblivious RWA} 
 \KwIn{$\textbf{h}^\textup{s}, \textbf{h}^\textup{d}$} 
 \KwOut{ $w_1 .... w_\ell$}
 \For{$i=1; i\le \ell; $ ++$i$} 
 { $w_i =  (h^d_i - h^s_i)_d$\;
      }
 return $w_1 .... w_\ell$\; 
 \end{algorithm}

Since the host-to-host communication is composed of $d^\ell-1$ non-zero permutations, the scheme in (\ref{RWA_1}) uses at most $d^\ell-1$ wavelengths. 
In other words, we have $\omega(\mathcal{B}(\ell, d)) \le d^\ell-1$. 
Recall that we have $ \pi(\mathcal{B}(\ell, d)) = d^\ell -d^{\ell-1} \le \omega(\mathcal{B}(\ell, d)) $. 
Combining the two results together, we get
\begin{equation}
\label{loose_bound}
d^{\ell} - d^{\ell-1} \le \omega(\mathcal{B}(\ell, d)) \le d^\ell - 1.
\end{equation}
However, the proposed RWA  does not use wavelengths in an optimal way.
For example, dipath sets $R(1 0 0)$, $R(0 2 0)$, and $R(0 0 2)$  in  $\mathcal{B}(3, 3)$ can be assigned with a same wavelength since their dipaths use directed links of different layers. 
Towards a better understanding on the minimum usage of wavelengths, we conduct a deeper investigation on the upper bound of $\omega(\mathcal{B}(\ell, d))$ in the next section.

%


\subsection{Bounds of the Optical Index}
To derive a tigher bound of $\omega(\mathcal{B}(\ell, d))$, we transform this problem into a vertex coloring problem.
Then, we derive an upper bound of $\omega(\mathcal{B}(\ell, d))$ using existing results on the chromatic number in Graph Theory. 
To begin with, we bring the following property of CPRs, which motivates the problem transformation. 
\begin{lem}
\label{collision_analysis}
Consider $R(x_1 ... x_\ell)$ and $R(y_1 ... y_\ell)$. 
Dipaths in $R(x_1 ... x_\ell)$ and $R(y_1 ... y_\ell)$ must collide at links of layer $i$ if $x_i\neq 0$ and $y_i\neq 0$.
\end{lem}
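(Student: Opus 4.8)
The plan is to pin down exactly which layer-$i$ links the dipaths of a CPR use, and then to show that when $x_i\neq 0$ the routing $R(x_1\ldots x_\ell)$ is forced to touch \emph{every} layer-$i$ link; the same will hold for $R(y_1\ldots y_\ell)$ when $y_i\neq 0$, so the two routings cannot avoid sharing one. This ``total coverage'' statement is stronger than what the lemma asks for, but it is the most transparent way to guarantee a collision.

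First I would locate, for each S-D pair $(\mathbf{h}^s,\mathbf{h}^d)\in P(x_1\ldots x_\ell)$ with $x_i\neq 0$, the hop of its descending dipath that fixes component $i$. Because descending dipaths fix components in decreasing index order, by the time component $i$ is fixed every component with index $>i$ already sits at its destination value while components with index $\le i$ still sit at their source values. Hence the intermediate host just before this hop is $\mathbf{h}^I=h^s_1\ldots h^s_i\,h^d_{i+1}\ldots h^d_\ell$, and the layer-$i$ uplink traversed is $\mathbf{u}^i=\mathbf{h}^I$, exactly as in the link-labeling convention of Section~3 (and consistent with the $\mathbf{h}^I$ used in the proof of Lemma~\ref{link-disjoint permutation routes}).

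Next I would run $\mathbf{h}^s$ over all of $\mathbb{Z}_d^{\ell}$ and track the resulting uplink label. Writing $h^d_j=(h^s_j+x_j)_d$, the coordinates of $\mathbf{u}^i$ are $u^i_j=h^s_j$ for $j\le i$ and $u^i_j=(h^s_j+x_j)_d$ for $j>i$. This is a coordinatewise shift, hence a bijection of $\mathbb{Z}_d^{\ell}$ onto itself, so as the source ranges over all hosts the uplink $\mathbf{u}^i$ ranges over \emph{all} layer-$i$ uplinks (each exactly once, which also re-confirms the link-disjointness of Lemma~\ref{link-disjoint permutation routes}). Thus every layer-$i$ uplink is used by some dipath of $R(x_1\ldots x_\ell)$. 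Applying the identical argument with $y_i\neq 0$ shows every layer-$i$ uplink is also used by $R(y_1\ldots y_\ell)$; picking any one of them exhibits a shared arc, so the two routings collide at layer $i$.

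I expect the only delicate point to be the middle step: correctly reading off the intermediate host $\mathbf{h}^I$ and its associated uplink from the \emph{descending} component-fixing order, since one must be sure which components have already been rewritten to their destination values at the moment component $i$ is fixed. Once $\mathbf{h}^I$ is identified, the remaining invertibility check is routine, and the downlink side (with $\mathbf{d}^i=h^s_1\ldots h^s_{i-1}h^d_i\ldots h^d_\ell$) would give an alternative collision argument if one preferred to work with downlinks instead.
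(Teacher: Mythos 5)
Your proof is correct and reaches the same key fact as the paper's own proof: when $x_i\neq 0$, the routing $R(x_1\ldots x_\ell)$ covers \emph{every} layer-$i$ uplink, and likewise for $R(y_1\ldots y_\ell)$ when $y_i\neq 0$, so a shared arc is unavoidable. The only difference is how total coverage is established: the paper observes that each of the $d^\ell$ dipaths uses a layer-$i$ uplink, that these uplinks are pairwise distinct by Lemma \ref{link-disjoint permutation routes}, and then counts ($d^\ell$ dipaths versus $d^\ell$ layer-$i$ uplinks), whereas you compute the uplink label $\mathbf{u}^i=h^s_1\ldots h^s_i\,h^d_{i+1}\ldots h^d_\ell$ explicitly and note that $\mathbf{h}^s\mapsto\mathbf{u}^i$ is a coordinatewise shift of $\mathbb{Z}_d^\ell$, hence a bijection. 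Your route is self-contained (it does not invoke Lemma \ref{link-disjoint permutation routes} and in fact reproves its layer-$i$ part), at the cost of having to justify the intermediate-host formula from the descending component-fixing order --- which you do correctly, and which matches the conventions used in the proofs of Theorem \ref{thm1} and Lemma \ref{link-disjoint permutation routes}; the paper's version is shorter because it leans on the already-proved link-disjointness.
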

\begin{proof}
If $x_i\neq 0$, a dipath in $R(x_1 ... x_\ell)$ must traverse an uplink and a downlink of layer $i$. Moreover, according to the link-disjoint property in Lemma \ref{link-disjoint permutation routes}, different dipaths of $R(x_1 ... x_\ell)$ use different directed links at layer $i$. 
Since the number of diapths in $R(x_1 ... x_\ell)$ is equal to that of uplinks (downlinks) at layer $i$, we infer that each uplink (downlink) of layer $i$ is traversed by exactly one dipath in $R(x_1 ... x_\ell)$. 
The above result also applies to $R(y_1 ... y_\ell)$ with $y_i\neq 0$.
The result follows. 
\end{proof}

We continue to follow the idea of assigning a single wavelength to $R(p_1 ... p_\ell)$.
To achieve nonblocking lightpaths, we assign different wavelengths to $R(x_1...x_\ell)$ and $R(y_1 ... y_\ell)$  if there exists some $i$ such that $x_i\neq 0$ and $y_i \neq 0$.
We refer to the above constraint as Wavelength Assignment Constraint (WAC), based on which, we draw a graph, denoted by $G=(R, E)$.
Each node in $R$ indicates a $R(p_1 ... p_\ell)$;  two nodes are adjacent if they satisfy WAC. 
We then use colors to represent wavelengths, and color adjacent nodes in $(R, E)$ with different colors.
Clearly, the goal of vertex coloring is to minimize the usage of colors, which is known as a vertex coloring problem.

In a vertex coloring problem, the chromatic number of a graph $\chi(G)$, is defined as  the minimum number of colors in order to color adjacent nodes with different colors.
Let $\Delta(G)$ be the maximum degree of graph $G$. 
Brooks's theorem \cite{Brooks} proved that 
\begin{equation}
\label{vertex_coloring}
\chi(G)\le \Delta(G), 
\end{equation}
where $G$ is a connected simple graph that is neither a complete graph nor an odd cycle.
Besides, it has been shown that complete graphs have
\begin{equation}
\label{vertex_coloring_2}
\chi(G)= \Delta(G)+1.
\end{equation}

Since $(R, E)$ is derived under the restriction of assigning a single wavelength to a CPR, we further have
\begin{equation}
\label{chromatic_upper_bounds}
\omega(\mathcal{B}(\ell, d)) \le \chi(G).
\end{equation}
We then prove upper bounds of $\omega(\mathcal{B}(\ell, d))$ as follows.
\begin{thm}
\label{specific case}
Let $\ell$ and $d$ be positive integers. 
Then $\omega(\mathcal{B}(1, d)) = d-1$, $\omega(\mathcal{B}(2, d)) = d^2-d$, and for $\ell \ge 3$ we have 
\begin{displaymath}
d^{\ell} - d^{\ell-1} \le \omega(\mathcal{B}(\ell, d)) \le d^\ell - d^{\lfloor \frac{\ell}{2}\rfloor}- (\lfloor \frac{\ell}{2}\rfloor-1).
\end{displaymath}
\end{thm}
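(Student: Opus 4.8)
The plan is to treat the lower and upper bounds separately, and to split the upper bound according to whether $\ell\in\{1,2\}$ or $\ell\ge 3$. The lower bound is immediate: combining $\omega(\mathcal{B}(\ell,d))\ge\pi(\mathcal{B}(\ell,d))$ from (\ref{inequality_1}) with Theorem \ref{thm1} gives $\omega(\mathcal{B}(\ell,d))\ge d^\ell-d^{\ell-1}$ for every $\ell$. For the upper bound I would work entirely inside the WAC graph $G=(R,E)$ and bound $\chi(G)$, invoking (\ref{chromatic_upper_bounds}). Throughout I identify a vertex with its non-zero vector $p=p_1\cdots p_\ell\in\mathbb{Z}_d^\ell\setminus\{\mathbf{0}\}$; by Lemma \ref{collision_analysis} two vertices $p,q$ are adjacent exactly when their supports $\mathrm{supp}(p)=\{i:p_i\neq 0\}$ intersect. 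Hence an independent set of $G$ is precisely a family of vectors with pairwise disjoint supports, which is exactly the condition under which a single wavelength may be reused.

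For $\ell=1$ every non-zero vector has support $\{1\}$, so $G$ is the complete graph $K_{d-1}$ and $\chi(G)=d-1$; with the lower bound this gives the first equality. For $\ell=2$ I would partition the vertices by support: the $(d-1)^2$ full-support vectors form a clique adjacent to everything, so each needs its own colour, while the support-$\{1\}$ and support-$\{2\}$ vectors (both $d-1$ in number) have disjoint supports and can be matched into $d-1$ independent pairs. This yields $\chi(G)\le(d-1)^2+(d-1)=d^2-d$, which meets the lower bound and proves the second equality.

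For $\ell\ge 3$ the key construction is a one-level split. Set $t=\lfloor\ell/2\rfloor$, $A=\{1,\dots,t\}$, $B=\{t+1,\dots,\ell\}$, and classify each non-zero vector as \emph{pure-$A$} (support $\subseteq A$), \emph{pure-$B$} (support $\subseteq B$), or \emph{mixed}. A pure-$A$ vector and a pure-$B$ vector always have disjoint support and so are non-adjacent; consequently the subgraph induced on the pure vectors is the edgeless union of a copy $G_A$ of the dimension-$t$ WAC graph and a copy $G_B$ of the dimension-$(\ell-t)$ WAC graph, and it can be coloured with $\max(\chi(G_A),\chi(G_B))$ colours by reusing each colour once in $G_A$ and once in $G_B$. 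To colour a dimension-$m$ WAC graph I would use the elementary \emph{singleton merge}: place the $m$ unit vectors $e_1,\dots,e_m$ (each with a single entry equal to $1$) into one independent set and leave every other vector in its own class, giving $\chi\le d^m-m$. Assigning every mixed vector a fresh private colour — there are $d^\ell-d^t-d^{\ell-t}+1$ of them — then yields
\begin{displaymath}
\chi(G)\le\max\!\left(d^t-t,\ d^{\ell-t}-(\ell-t)\right)+\left(d^\ell-d^t-d^{\ell-t}+1\right).
\end{displaymath}
A short case check on the parity of $\ell$ finishes the argument: when $\ell$ is even the maximum is $d^t-t$ and the right-hand side collapses to exactly $d^\ell-d^t-(t-1)$, while when $\ell$ is odd the maximum is $d^{\ell-t}-(\ell-t)=d^{t+1}-t-1$ and the bound becomes $d^\ell-d^t-t$, which is even smaller than $d^\ell-d^t-(t-1)$.

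The routine parts are the adjacency reformulation via Lemma \ref{collision_analysis} and the arithmetic in the last display. The step I expect to need the most care is the block-colouring: one must verify that reusing colours across $G_A$ and $G_B$ genuinely gives a proper colouring (it does, since no pure-$A$ vector is adjacent to any pure-$B$ vector), and that the singleton merge really is what is required. Brooks' theorem (\ref{vertex_coloring}) applied to a dimension-$m$ block yields only $\chi\le\Delta=d^m-2$, which matches the merge for $m\le 2$ but is too weak for larger blocks; hence the explicit merge — or, equivalently, iterating the split recursively on the blocks — is essential rather than cosmetic once the blocks reach dimension three. The parity split is the other delicate point, since it is precisely the even case that makes the floor $\lfloor\ell/2\rfloor$ tight in the stated bound.
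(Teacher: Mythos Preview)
Your argument is correct, and it reaches the stated bound (indeed, you even improve it by one unit when $\ell$ is odd). However, the route is genuinely different from the paper's.

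The paper partitions the CPRs by \emph{Hamming weight}: class $\mathcal{C}_k$ collects all $R(p_1\cdots p_\ell)$ with exactly $k$ non-zero components, builds the WAC subgraph $G_k$ on each class, computes $\Delta(G_k)=\bigl(\binom{\ell}{k}-\binom{\ell-k}{k}\bigr)(d-1)^k-1$, and then applies Brooks' theorem to each $G_k$ with $2\le k\le\lfloor\ell/2\rfloor$ (while $G_1$ and the high-weight classes are handled directly as unions of cliques or single cliques). Summing the resulting bounds and estimating $\sum_k\binom{\ell-k}{k}(d-1)^k\ge d^{\lfloor\ell/2\rfloor}-1$ yields the theorem.

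You instead partition by \emph{coordinate block}: pure-$A$, pure-$B$, and mixed, where $A=\{1,\dots,\lfloor\ell/2\rfloor\}$. Because pure-$A$ and pure-$B$ vectors have disjoint supports, their colours can be reused across the two blocks, and within each block you bypass Brooks entirely with the elementary singleton-merge bound $\chi\le d^m-m$. This is more self-contained (no appeal to Brooks), makes the origin of the $d^{\lfloor\ell/2\rfloor}$ saving transparent, and gives $d^\ell-d^{\lfloor\ell/2\rfloor}-\lfloor\ell/2\rfloor$ rather than $d^\ell-d^{\lfloor\ell/2\rfloor}-(\lfloor\ell/2\rfloor-1)$ when $\ell$ is odd. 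The paper's weight-based decomposition, on the other hand, is more systematic and would adapt more readily if one wanted to refine the bound class by class.
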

\begin{proof}
We only prove the upper bound here since we have validated the lower bound in the last subsection. 
We consider respectively three cases, $\ell=1$, $\ell=2$, and $\ell >2$, and prove each case, separately. 
For the case $\ell =1$, the result holds naturally according to (\ref{loose_bound}).
For the case $\ell =2$, we assign dipaths in $R(p_1 p_2)$ with a wavelength $w_1 w_2$, whose value is given by 
\begin{equation}
\begin{cases}
w_1 = p_1, w_2 = p_2, & \textup{if}\ p_2 \neq 0,\\
w_1 = 0, w_2 = p_1, & \textup{if}\ p_2 = 0.
\end{cases}
\end{equation}
In other words, wavelength $0 x$ is used by dipaths in $R(x 0)$ and $R(0x)$ for any $x\in \{1, ..., d-1\}$.
Since dipaths in $R(x 0)$ and $R(0 x)$ use links of different layers, they can share a common wavelength.
Note that we do not include the zero permutation in host-to-host communication.
Therefore, we always have $w_2\neq 0$, which implies that the number of involved wavelengths is $d(d-1) =d^2-d$.
In other words, we have  $\omega(\mathcal{B}(2, d)) \le d^2-d$.
Together with (\ref{loose_bound}), we get $\omega(\mathcal{B}(2, d)) = d^2-d$.

The problem becomes much more complicated when  $\ell \ge 3$, where we fail to derive an exact number of the optical index.
Instead, we derive a tighter upper bound than the one in (\ref{loose_bound}). The steps are as follows.
First, we classify all CPRs into $\ell$ classes, $\mathcal{C}_1, ..., \mathcal{C}_\ell$, according to the number of non-zero components in $p_1...p_\ell$. 
That is, if $P(p_1...p_\ell)$ has $k$ nonzero components in $p_1...p_\ell$, then $P(p_1 ...p_\ell)\in \mathcal{C}_k$.
Second, we analyze the upper bound of wavelengths used by each class, separately. 
By summing up all these upper bounds together, we finally achieve an upper bound of $\omega(\mathcal{B}(\ell, d))$ for $\ell\ge 3$.

To begin with, we draw a graph, denoted by $G_k$, for each $\mathcal{C}_k$, respectively. The nodes of  $G_k$  are CPRs in $\mathcal{C}_k$, and the edges of  $G_k$ are added according to WAC.
The total number of nodes in $G_k$ is $\dbinom{\ell}{k}(d-1)^k$ since each CPR has $k$ nonzero components and  each nonzero component has $d-1$ choices.  
Let $\dbinom{\ell-k}{k} = 0 \  \textup{if} \  k>\frac{\ell}{2}$.
Consider any node in $G_k$ as a target node. 
According to WAC, we learn that there are $ \dbinom{\ell-k}{k}(d-1)^k$ nodes that are not adjacent to the target node in $G_k$.
In other words, the degree of this target node is given by
\begin{equation}
\left ( \dbinom{\ell}{k} - \dbinom{\ell-k}{k}\right)(d-1)^k -1.
\end{equation}
Besides, we can infer by symmetry that $G_k$ is a regular graph. 
Therefore, we have
\begin{equation}
\label{degree_G_k}
\Delta(G_k)  = \left( \dbinom{\ell}{k} - \dbinom{\ell-k}{k}\right)(d-1)^k-1.
\end{equation}

Note that $G_k$ with $k=1$ consists of $\ell$ independent complete graphs.
According to  (\ref{vertex_coloring_2}), the chromatic number of each independent complete graph is $d-1$. 
In particular, we have $\chi(G_1) = d-1$ since these independent complete graphs can share a common set of colors.
If  $k>\lfloor \frac{\ell}{2}\rfloor$, we also have $\chi(G_k) = \Delta(G_k) +1$ since $G_k$ is a complete graph.
For $k\in [2,\lfloor \frac{\ell}{2}\rfloor ]$, according to (\ref{vertex_coloring}), we have $\chi(G_k) \le \Delta(G_k)$.
In conclusion, we have 
\begin{equation}
\sum_{k=1}^\ell \chi(G_k) \le d-1+ \sum_{k=2}^{\lfloor \frac{\ell}{2}\rfloor}\Delta(G_k) +  \sum_{k=1+\lfloor \frac{\ell}{2}\rfloor}^{\ell} (\Delta(G_k)+1)
\end{equation}
In particular, let
\begin{equation}
\label{total number}
N_w = \sum_{k=1}^{\ell}(\Delta(G_k)+1),
\end{equation}
we have 
\begin{equation}
\sum_{k=1}^\ell \chi(G_k) \le N_w  -(\lfloor \frac{\ell}{2}\rfloor-1).
\end{equation}
By substituting  (\ref{degree_G_k}) into (\ref{total number}), we get 
\begin{equation}
\begin{split}
N_w &=  \sum_{k=1}^{\ell} \dbinom{\ell}{k}(d-1)^k  - \sum_{k=1}^{\lfloor\frac{\ell}{2}\rfloor}\dbinom{\ell-k}{k} (d-1)^k\\
& \le \sum_{k=1}^{\ell} \dbinom{\ell}{k}(d-1)^k -\sum_{k=1}^{\lfloor\frac{\ell}{2}\rfloor} \dbinom{\lfloor\frac{\ell}{2}\rfloor}{k} (d-1)^k\\
& = d^\ell -1  -(d^{\lfloor\frac{\ell}{2}\rfloor} -1) = d^\ell -d^{\lfloor\frac{\ell}{2}\rfloor}
\end{split}
\end{equation}
Therefore, we have
\begin{equation}
\sum_{k=1}^\ell \chi(G_k)  \le  d^\ell -d^{\lfloor\frac{\ell}{2}\rfloor} -(\lfloor \frac{\ell}{2}\rfloor-1).
\end{equation}
Thus, according to (\ref{chromatic_upper_bounds}), we finish the proof.
\end{proof}

\section{Conclusion}
In this paper, we study host-to-host routing in an all-optical BCube network, where we mainly focus on the forwarding and optical indices. 
Specifically, we succeed in deriving the exact values of the forwarding indices in all BCube networks and the exact values of the optical indices in BCube networks that have only one or two switch layers. 
For BCube networks with more than two switch layers, we derive tight upper bounds of the optical indices after formulating the problem as a vertex coloring problem on the basis of CPRs. 
Besides, we also propose an oblivious RWA scheme which can assign a lightpath to every S-D host pair based only on its source and destination addresses. 
Although we have shown that the proposed RWA is not optimal in wavelength usage, it has the advantage in  low implementation complexity.

\bibliographystyle{ieeetr}
\bibliography{myReference}

\begin{thebibliography}{10}

\bibitem{all_optical_scheduling}
C.-H. Wang, T.~Javidi, and G.~Porter, ``End-to-end scheduling for all-optical
  data centers,'' in {\em Proc. IEEE INFOCOM}, 2015.

\bibitem{wave_Cube_ToN}
K.~Chen, X.~Wen, X.~Ma, Y.~Chen, Y.~Xia, C.~Hu, Q.~Dong, and Y.~Liu, ``Toward a
  scalable, fault-tolerant, high-performance optical data center
  architecture,'' {\em to appear in IEEE/ACM Transactions on Networking}, 2017.

\bibitem{OSA_optical_switching}
K.~Chen, A.~Singla, A.~Singh, K.~Ramachandran, L.~Xu, Y.~Zhang, X.~Wen, and
  Y.~Chen, ``Osa: An optical switching architecture for data center networks
  with unprecedented flexibility,'' {\em IEEE/ACM Transactions on Networking},
  vol.~22, no.~2, pp.~498--511, 2014.

\bibitem{Helios}
N.~Farrington, G.~Porter, S.~Radhakrishnan, H.~H. Bazzaz, V.~Subramanya,
  Y.~Fainman, G.~Papen, and A.~Vahdat, ``Helios: a hybrid electrical/optical
  switch architecture for modular data centers,'' {\em ACM SIGCOMM Computer
  Communication Review}, vol.~20, no.~4, pp.~339--350, 2010.

\bibitem{al2008scalable}
M.~Al-Fares, A.~Loukissas, and A.~Vahdat, ``A scalable, commodity data center
  network architecture,'' {\em ACM SIGCOMM Computer Communication Review},
  vol.~38, no.~4, pp.~63--74, 2008.

\bibitem{guo2009bcube}
C.~Guo, G.~Lu, D.~Li, H.~Wu, X.~Zhang, Y.~Shi, C.~Tian, Y.~Zhang, and S.~Lu,
  ``Bcube: a high performance, server-centric network architecture for modular
  data centers,'' {\em ACM SIGCOMM Computer Communication Review}, vol.~39,
  no.~4, pp.~63--74, 2009.

\bibitem{ExCCC-DCN}
Z.~Zhang, Y.~Deng, G.~Min, J.~Xie, and S.~Huang, ``Exccc-dcn: A highly
  scalable, cost-effective and energy-efficient data center structure,'' {\em
  IEEE Transactions on Parallel and Distributed Systems}, vol.~28, no.~4,
  pp.~1046--1060, 2017.

\bibitem{clos1953study}
C.~Clos, ``A study of non-blocking switching networks,'' {\em Bell System
  Technical Journal}, vol.~32, no.~2, pp.~406--424, 1953.

\bibitem{routing_W}
R.~Ramaswami and K.~N. Sivarajan, ``Routing and wavelength assignment in
  all-optical networks,'' {\em IEEE/ACM Transactions on Networking}, vol.~3,
  no.~5, pp.~489--500, 1995.

\bibitem{RWA_opt}
A.~E. Ozdaglar and D.~P. Bertsekas, ``Routing and wavelength assignment in
  optical networks,'' {\em IEEE/ACM Transactions on Networking}, vol.~11,
  no.~2, pp.~259--272, 2003.

\bibitem{optical_fowarding}
J.-C. Bermond, L.~Gargano, S.~Perennes, A.~A. Rescigno, and U.~Vaccaro,
  ``Efficient collective communication in optical networks,'' {\em Theoretical
  Computer Science}, vol.~233, no.~1, pp.~165--189, 2000.

\bibitem{forward_optical_paper}
H.-S. Gan, H.~Mokhtar, and S.~Zhou, ``Forwarding and optical indices of
  4-regular circulant networks,'' {\em Journal of Discrete Algorithms},
  vol.~35, 2015.

\bibitem{full_connections}
B.~Beauquier, {\em All-to-all communication for some wavelength-routed
  all-optical networks}.
\newblock PhD thesis, INRIA, 1998.

\bibitem{global_packing}
Y.-H. Lo, Y.~Zhang, Y.~Chen, H.-L. Fu, and W.~S. Wong, ``The global packing
  number of a fat-tree network,'' {\em to appear in IEEE Transactions on
  Information Theory}, 2017.

\bibitem{all-to-alloptical}
H.~Schr{\"o}der, O.~S{\`y}kora, and I.~Vrt'o, ``Optical all-to-all
  communication for some product graphs,'' in {\em Proc. SOFSEM: Theory and
  Practice of Informatics}, 1997.

\bibitem{all-to-all3}
B.~Beauquier, S.~P{\'e}rennes, and D.~T{\'o}th, ``All-to-all routing and
  coloring in weighted trees of rings,'' in {\em Proc. ACM symposium on
  Parallel Algorithms and Architectures}, 1999.

\bibitem{bhuyan1984generalized}
L.~N. Bhuyan and D.~P. Agrawal, ``Generalized hypercube and hyperbus structures
  for a computer network,'' {\em IEEE Transactions on Computers}, vol.~100,
  no.~4, pp.~323--333, 1984.

\bibitem{forwarding_bound}
M.-C. Heydemann, J.~Meyer, and D.~Sotteau, ``On forwarding indices of
  networks,'' {\em Discrete Applied Mathematics}, vol.~23, no.~2, 1989.

\bibitem{Brooks}
R.~L. Brooks, ``On colouring the nodes of a network,'' in {\em Proc. the
  Cambridge Philosophical Society}, 1941.

\end{thebibliography}

\end{document}